\theoremstyle{plain}%
	\newtheorem{corollary}{Corollary}%
	\newtheorem{theorem}{Theorem}%
\theoremstyle{definition}%
	\newtheorem{assumption}{Assumption}%
	\newtheorem{definition}{Definition}%
\newcommand{\A}{\mathcal{A}}%
\newcommand{\E}{\mathbb{E}}%
\newcommand{\cov}{\mathrm{cov}}%
\newcommand{\var}{\mathrm{var}}%
\let\originalleft\left%
\let\originalright\right%
\renewcommand{\left}{\mathopen{}\mathclose\bgroup\originalleft}%
\renewcommand{\right}{\aftergroup\egroup\originalright}%
\def\mybigx#1{\dimen@#1\relax%
\mathchoice%
{\vbox to \dimen@{}}%
{\vbox to \dimen@{}}%
{\vbox to .7\dimen@{}}%
{\vbox to .5\dimen@{}}}%
\def\mybig#1{{\hbox{$\left#1\mybigx{0.8em}\right.\n@space$}}}%
\begin{document}
\begin{frontmatter}
\title{Event conditional correlation}
\runtitle{Event conditional correlation}
\begin{aug}
\author{\fnms{Pierre-Andr\'e G.} \snm{Maugis}\ead[label=e1]{p.maugis@ucl.ac.uk}}
\affiliation{University College London}
\address{Department of Statistical Science\\
University College London\\
Gower Street\\
London WC1E 6BT, UK\\
\printead{e1}}
\runauthor{P-A. G. Maugis}
\end{aug}
\begin{abstract}
Entries of datasets are often collected only if an event occurred: taking a survey, enrolling in an experiment and so forth. However, such partial samples bias classical correlation estimators. Here we show how to correct for such sampling effects through two complementary estimators of event conditional correlation: the correlation of two random variables conditional on a given event. First, we provide under minimal assumptions proof of consistency and asymptotic normality for the proposed estimators. Then, through synthetic examples, we show that these estimators behave well in small-sample and yield powerful methodologies for non-linear regression as well as dependence testing. Finally, by using the two estimators in tandem, we explore counterfactual dependence regimes in a financial dataset. By so doing we show that the contagion which took place during the 2007--2011 financial crisis cannot be explained solely by increased financial risk.
\end{abstract}
\begin{keyword}[class=MSC]
\kwd[Primary: ]{62H20}
\kwd[secondary: ]{62H10}
\end{keyword}
\begin{keyword}
\kwd{Correlation; Measures of Association; Regression; Piecewise-Linear-Approximation; Networks}
\end{keyword}
\end{frontmatter}
\section{Introduction}
We provide methods to estimate and compare correlation estimates based on partial samples. We do so by deriving under minimal assumptions the properties of a new dependence parameter we introduce: event conditional correlation.

We define event conditional correlation as the correlation of two variables $X$ and $Y$ conditionally to an event $\A$ and denote it $\rho_{XY\mid \A}$. Event conditional correlation is the natural correlation parameter when working with partial samples. Consider the case where one is able to measure $(X,Y)$ only if a third random variable $Z$ is large enough, say larger than a threshold $z$. One classical example~\cite{Akemann83} is knowing the grades of students (the variables $X$ and $Y$), only if these students had high enough scores in high school (the variable $Z$) to enter university. In this setting, naively using the ordinary least squares estimator of correlation on the available sample produces an estimate of $\rho_{XY\mid Z>z}$. Such an estimate can be sensibly different from $\rho_{XY}$, the classical or unconditional correlation parameter. We provide a quantified example in Fig~\ref{fig:GCorrel} where $(X,Y,Z)$ is a trivariate Gaussian vector.

\begin{figure}
\centering
  \subfloat[Correlation across quantiles]{%
  \label{fig:GCorrel}%
  \includegraphics[width=.4\textwidth,height=.4\textwidth]{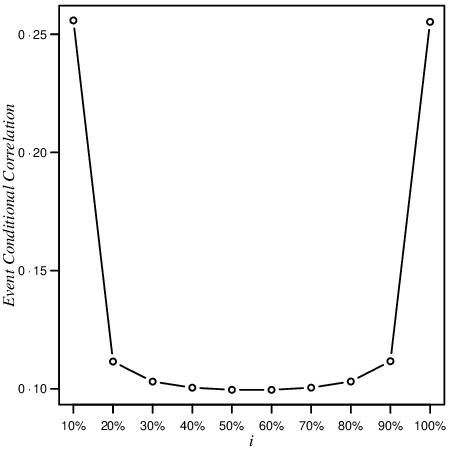}}
  \subfloat[Eigenvectors across quantiles]{%
  \label{fig:PCA}%
  \includegraphics[width=.445\textwidth,height=.4\textwidth,
  	trim = 3.75cm 3.75cm 3.75cm 3.75cm,clip]{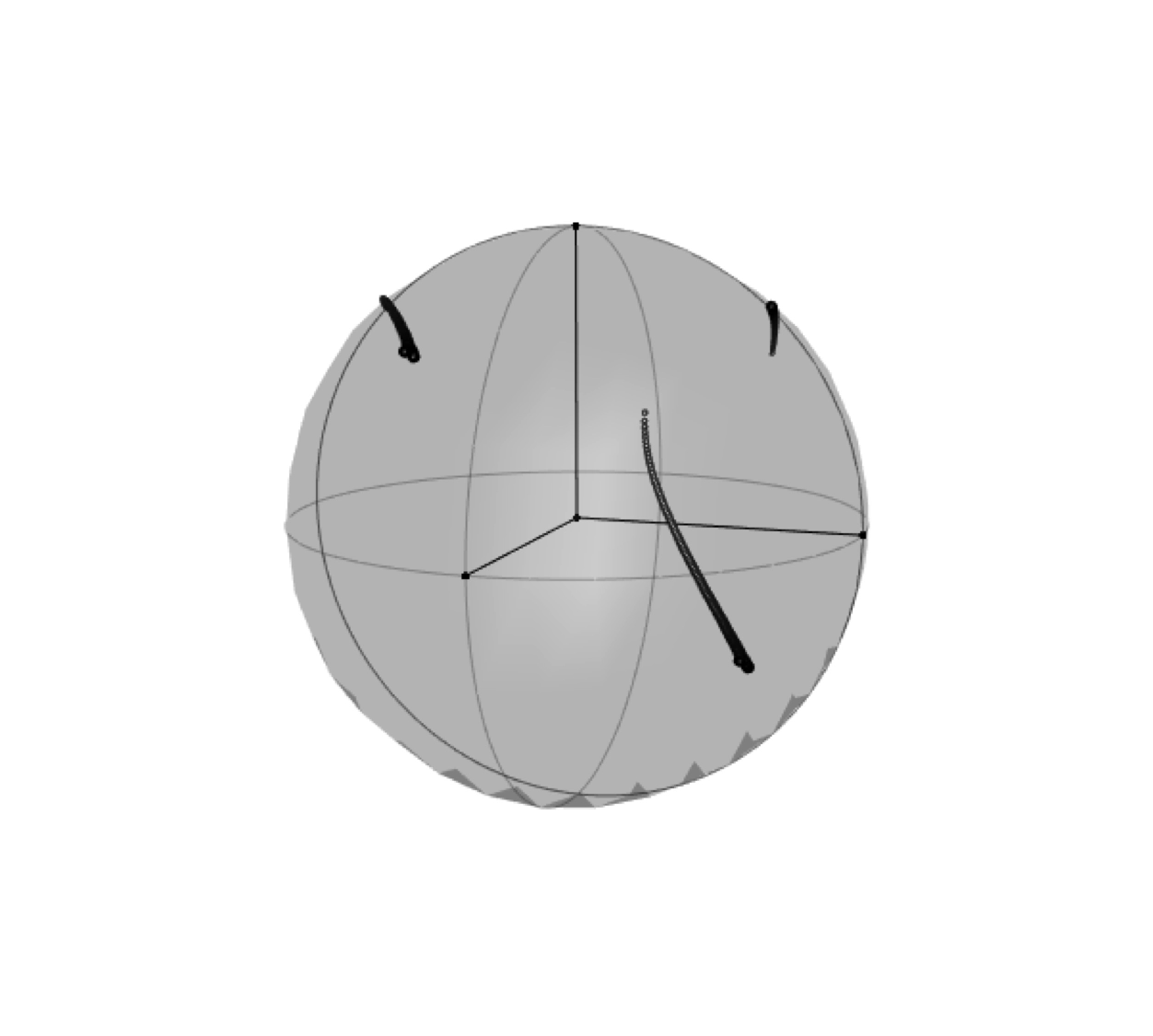}}
  \caption{%
  Effect of sampling on dependence structure. %
  (a) Plot of $\rho_{XY\mid Z \in [Q_Z(i-0\cdot1),Q_Z(i)]}$ as a function of $i$ (in percent) where $Q_Z$ is the quantile function of $Z$ and $(X,Y,Z)$ follows a trivariate Gaussian distribution (with $\rho_{XY} = $ 0$\,\cdot$6, $\rho_{XZ} = $ 0$\,\cdot$7 and $\rho_{YZ} = $ 0$\,\cdot$8). %
  (b) Plot of the eigenvectors of the covariance matrix of $(X_1,X_2,X_3\mid Z >z)$ for $z\in (-5,5)$ where $(X_1,X_2,X_3,Z)$ is a Gaussian vector such that marginals have unit variance. Larger dots correspond to larger values of $z$. %
  These figures are produced using simulations. %
  Computations throughout the article are made in {\tt R}~\cite{R}. Fig~1b was built using the package {\tt rgl}~\cite{rgl}.} %
\end{figure}

A more complex version of this problem has become central in finance since the 2007--2011 crisis. Let $X$ and $Y$ be two assets returns, and $Z$ be the overall volatility of the market. So as to quantify the risk a financial institution would face during a crisis, one must estimate $\rho_{XY\mid Z>z_c}$, for $z_c$ a crisis volatility threshold potentially larger than all observed $Z$~\cite{Campbell,Forbes,nature}. As Fig~\ref{fig:GCorrel} shows, $\rho_{XY\mid Z>z_c}$ can be markedly larger than $\rho_{XY\mid Z<z}$ even when $(X,Y,Z)$ is a Gaussian vector. Furthermore, Fig~\ref{fig:PCA}, shows that in higher dimensions, conditioning by $\{Z>z_c\}$ non-trivially affects the eigenvectors of the covariance matrix. It follows that efficient estimators of $\rho_{XY\mid Z>z_c}$ are needed by banks: to properly determine how much fund to set aside in provision of a crisis~\cite{Packham}, and to efficiently allocate assets during a crisis~\cite{Kenett2014}.

We present two estimators that address under minimal assumptions these problems. First, in Theorem~\ref{Thm1}, we propose an admissible estimator of $\rho_{XY\mid \A}$ for any $\A$ such that $\mathbb{P}\left(\A\right) > 0$. Second, in Theorem~\ref{Col2}, we present an estimator of $\rho_{XY}$ relying on a sample where for all realizations an event $\A$ is verified (henceforth referred to as an $\A$-sample). Using both estimators allows to estimate $\rho_{XY\mid \A}$ given a $\A'$-sample, this for any two events $\A$ and $\A'$ with non zero probabilities of occurring.

These results describe a highly counter-intuitive and non-trivial phenomena. As shown in Figs~\ref{fig:GCorrel} and \ref{fig:PCA}, event conditional correlation has a strikingly far from linear behavior even in the Gaussian case. This underlines how non-linear linear dependence can be. As the analysis will show, it is the homogeneous nature of correlation that allows to transport estimates under one condition to another. However, the scale at which it is observed under a given condition is driven by the conditional variances of the variables under $\A$.
 
The proposed estimators can directly be used to complement many statistical approaches, either to address sampling problems or to extend them to non-linear cases. We present three examples: the first focuses on a non-linear regression method that uses event conditional correlation (Section~\ref{seg-reg}); the second considers the power of event conditional correlation to test for independence while relying on a partial sample (Section~\ref{testing}); the third contrasts the realized and counterfactual topologies of a financial market across risk regimes (Section~\ref{contagion}.) We conclude on the implications of Theorems~\ref{Thm1} and~\ref{Col2} on the robustness to sampling of the leading eigenvectors of covariance matrices.

\section{Relations with other dependence parameters}

Here we discuss how event conditional correlation generalizes many partial dependence parameters. We derive the properties of event conditional correlation starting from Section~\ref{ECC}.

We begin by formally defining $\rho_{XY\mid\A}$. For two real valued random variables $X$ and $Y$ defined on the probability space $(\Omega,\mathcal{F},\mathbb{P})$ and assuming $\A$ to be in $\mathcal{F}$ and such that $\mathbb{P}(\A)>0$, event conditional correlation verifies:
\begin{equation*}
\rho_{XY\mid \A} =\dfrac{
\E\left[
	\left(X-\E\left[X\vert\A\right]\right)
	\left(Y-\E\left[Y\vert\A\right]\right)
	\Big\vert\A\right]}
{\sqrt{\E\left[\left(X-\E\left[X\vert\A\right]\right)^2\Big\vert\A\right]
	 \E\left[\left(Y-\E\left[Y\vert\A\right]\right)^2\Big\vert\A\right]}}\cdot
\end{equation*}
Practically, as in the examples above, we will access $\A$ though a third random variable $Z$ defined on the same probability space as $X$ and $Y$. Then, $\A$ will take the form $\A = Z^{-1}(B)$, for $B$ a subset of the support of $Z$.

Event conditional correlation belongs to the class of conditional dependence parameters. All such parameters are built by considering the dependence between two variables $X$ and $Y$ while controlling for the behavior of a third variable $Z$. We will see that using events $\A$ allows us to replicate most such controls.

The simplest way of controlling $Z$ is to fix it at some value $z$. This is how conditional correlation, which we will write $\rho_{XY\mid Z=z}$, is built. The relationship between $\rho_{XY\mid Z=z}$ and $\rho_{XY\mid\A}$ can be formalized by considering a sequence $\A_t$ of events tending to $Z^{-1}(\{z\})$ and such that $\mathbb{P}(\A_t)>0$ for all $t$. If such a sequence exists, then $\smash{\rho_{XY\mid\A_t}}\to\smash{\rho_{XY\mid Z=z}}$. Fixing $Z$ at a given value is also how conditional copulas are built~\cite{acar2013,poczos2012copula,gijbels2012}. However, the drawback of fixing $Z$, is that unless $\mathbb{P}(Z=z)>0$ both conditional correlation and conditional copulas are only identifiable under parametric assumptions. On the other hand, we can always estimate event conditional correlation non-parametrically. Many other dependence parameters are built as the expectation over $Z$ of a conditional dependence parameter defined for a fixed $z$; e.g., liquid association~\cite{li2002}, incomplete Lancaster interaction~\cite{sejdinovic2013}, partial martingale difference correlation~\cite{park2015} and conditional information~\cite{poczos2012}. While taking the expectation makes it possible to use non-parametric methods in these cases, the obtained parameters lose their informational content regarding the local variations of the dependence.

An other way of controlling $Z$ is to push it into one of its tails. Such conditional dependence parameter take the form of limit event conditional correlation: $\lim_{z\to \infty}\rho_{XY\mid Z\geq z}$. This type of dependence parameter is explored further in~\cite{akemann1984}. Importantly, the case $Z = \min(X,Y)$ leads to the tail dependence parameter.

The influence of $Z$ over $(X,Y)$ can also be controlled by removing its linear effects on $X$ and $Y$. This leads to partial correlation---that we denote $\rho_{XY\mid Z}$. There also exists a direct relation between $\rho_{XY\mid\A}$ and $\rho_{XY\mid Z}$. We formalize it below in~\eqref{eq:Thm1Simple}.

Thus, $\rho_{XY\mid\A}$ is a bridge between partial, conditional and tail correlations. Furthermore, it can be used to describe when all or some of these parameters match in the neighborhood of a given $z$. This part of our work completes the discussion started in~\cite{Lawrance1976} and continued more recently in~\cite{Ritei,Baba2005}. However, there are no direct connections between event conditional correlation and correlation distance as introduced in~\cite{szŽkely2007}. Correlation distance and related dependence parameters aim to generalize correlation to test for dependence between random vectors. On the other hand, event conditional correlation aims to describe in more details the dependence between two scalar variables. Nonetheless, we show in Section~\ref{testing} that event conditional correlation can detect dependence when more complex measures do not, while in Section~\ref{Conc} we discuss the consequences of Theorem~\ref{Thm1} and~\ref{Col2} on the structure of conditional covariance matrices.

\section{New event conditional correlation estimator}\label{ECC}
In this section we present an admissible estimator of event conditional correlation that uses all the available sample rather than the subsample where the condition is verified, as is common practice. We will proceed in two steps, first introducing the hypotheses and notation, before presenting the general result along with a small-sample study (see Fig~\ref{MC_1}).

\begin{definition}
Let $X$ and $Y$ be two centered real valued univariate random variables with finite second moments. Let $Z_1$ and $Z_2$ be two real valued random vectors of the same dimension, possibly containing (or equal) to $X$ or $Y$, both having second moments. Finally assume $\A$ to be $Z_1$ and $Z_2$-measurable and of non-zero probability of occurring; i.e., there exists $B_1\subset \mathbb{R}^{dim(Z_1)}$ and $B_2\subset \mathbb{R}^{dim(Z_2)}$ such that $\A = Z_1^{-1}(B_1)\cap Z_2^{-1}(B_2)$ and $\mathbb{P}(\A)>0$. 

We use classical notation: $\rho$-s are the correlations and $\Sigma$-s are the covariance matrices of the variables in index (we use $\sigma$-s for the standard deviation in the univariate case), finally $\beta$-s are regression parameters and $\epsilon$-s are the regression residuals. For instance for $Z$ a centered scalar random variable we have:
\begin{equation*}
X = Z\beta_{XZ} + \epsilon_{XZ},
\end{equation*}
with $\beta_{XZ}$ the classical ordinary least squares regression parameter, equal to $\rho_{XZ}\sigma_X/\sigma_Z$.
\end{definition}

\begin{assumption}
We define here the two assumptions $\mathbf{A1}$ and $\mathbf{A2}$:
\begin{align*}
\mathbf{A1}(X,Y,Z_1,Z_2,\A):&\quad \Sigma_{\epsilon_{XZ_1}\epsilon_{YZ_2}}=\Sigma_{\epsilon_{XZ_1}\epsilon_{YZ_2}\mid \A},\\
\mathbf{A2}(X,Y,Z_1,Z_2,\A):&\quad \cov\left(Z_1\beta_{XZ_1},\epsilon_{YZ_2}\mid \A\right)+\cov\left(Z_2\beta_{YZ_2},\epsilon_{XZ_1}\mid \A\right)=0.
\end{align*}
\end{assumption}
These assumptions should be seen as minimal since $\mathbf{A1}$ is necessary according to~\cite{Ritei}, and if $Z_1 = Z_2$, $\mathbf{A2}$ is automatically verified. Meeting $\mathbf{A1}$ can be attained by adding the sufficient number of covariates in $Z_1$ and $Z_2$, something that was not possible before our contribution. Finally, if $\mathbf{A1}$ remains falsified, the bias induced in the following estimators can be controlled by $\|\Sigma_{Z_1,Z_2\mid\A}-\Sigma_{Z_1,Z_2}\|^{-2}$, making them still of interest in cases where this value is small.

\begin{theorem}\label{Thm1}
Under $\mathbf{A1}$ and $\mathbf{A2}$, we have that:
\begin{equation}
\rho_{XY\mid \A} = \dfrac{\cov(X,Y)+\beta_{XZ_1}^{\top}\delta_\A(Z_1,Z_2)\beta_{YZ_2}}{\left[ \sigma_X^2+\beta_{XZ_1}^{\top}\delta_\A(Z_1,Z_1)\beta_{XZ_1}\right]^{\frac{1}{2}}\left[ \sigma_Y^2+\beta_{YZ_2}^{\top}\delta_\A(Z_2,Z_2)\beta_{YZ_2}\right]^{\frac{1}{2}}},
\label{eq:Thm1}
\end{equation}
with for all $i,j\leq2,\ \delta_\A(Z_i,Z_j) = \cov(Z_i,Z_j\mid \A) - \cov(Z_i,Z_j).$
\end{theorem}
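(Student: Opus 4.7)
The plan is to substitute the OLS decompositions $X=Z_1^{\top}\beta_{XZ_1}+\epsilon_{XZ_1}$ and $Y=Z_2^{\top}\beta_{YZ_2}+\epsilon_{YZ_2}$ into the definition of $\rho_{XY\mid\A}$ and to expand every conditional second moment by bilinearity. The role of the two assumptions is then transparent: $\mathbf{A2}$ kills the cross-terms coupling a projection with the other regression's residual, while $\mathbf{A1}$ lets the residual-residual conditional covariance be identified with its unconditional value, which is in turn re-absorbed into $\cov(X,Y)$.

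Concretely, for the numerator I would expand
\begin{equation*}
\cov(X,Y\mid\A)=\beta_{XZ_1}^{\top}\cov(Z_1,Z_2\mid\A)\beta_{YZ_2}+\cov(Z_1^{\top}\beta_{XZ_1},\epsilon_{YZ_2}\mid\A)+\cov(\epsilon_{XZ_1},Z_2^{\top}\beta_{YZ_2}\mid\A)+\cov(\epsilon_{XZ_1},\epsilon_{YZ_2}\mid\A),
\end{equation*}
drop the middle two terms by $\mathbf{A2}$, and replace the last one with $\cov(\epsilon_{XZ_1},\epsilon_{YZ_2})$ by $\mathbf{A1}$. Running the same expansion unconditionally (and invoking the $\A=\Omega$ instance of $\mathbf{A2}$) gives the Pythagorean-type identity $\cov(X,Y)=\beta_{XZ_1}^{\top}\cov(Z_1,Z_2)\beta_{YZ_2}+\cov(\epsilon_{XZ_1},\epsilon_{YZ_2})$; eliminating $\cov(\epsilon_{XZ_1},\epsilon_{YZ_2})$ between the two expressions and regrouping via $\delta_\A(Z_1,Z_2)$ yields the stated numerator $\cov(X,Y)+\beta_{XZ_1}^{\top}\delta_\A(Z_1,Z_2)\beta_{YZ_2}$.

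For the denominator I would reuse the same expansion specialized to $(Y,Z_2)=(X,Z_1)$. Here $\mathbf{A2}$ collapses to $2\cov(Z_1^{\top}\beta_{XZ_1},\epsilon_{XZ_1}\mid\A)=0$, the case the statement flags as automatic, and $\mathbf{A1}$ gives $\var(\epsilon_{XZ_1}\mid\A)=\var(\epsilon_{XZ_1})$. Combined with the genuinely automatic unconditional identity $\sigma_X^2=\beta_{XZ_1}^{\top}\cov(Z_1,Z_1)\beta_{XZ_1}+\var(\epsilon_{XZ_1})$ (which follows from OLS orthogonality $\cov(Z_1,\epsilon_{XZ_1})=0$), this produces $\var(X\mid\A)=\sigma_X^2+\beta_{XZ_1}^{\top}\delta_\A(Z_1,Z_1)\beta_{XZ_1}$, and symmetrically for $Y$. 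Assembling the three pieces delivers \eqref{eq:Thm1}.

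I expect the numerator step to be the only non-routine part: the required unconditional identity $\cov(X,Y)=\beta_{XZ_1}^{\top}\cov(Z_1,Z_2)\beta_{YZ_2}+\cov(\epsilon_{XZ_1},\epsilon_{YZ_2})$ is not a pure consequence of OLS orthogonality when $Z_1\neq Z_2$, since the two cross-terms $\beta_{XZ_1}^{\top}\cov(Z_1,\epsilon_{YZ_2})$ and $\beta_{YZ_2}^{\top}\cov(Z_2,\epsilon_{XZ_1})$ need not vanish individually. One must therefore read $\mathbf{A2}$ as holding at every event of positive probability, in particular at $\A=\Omega$, or supply the unconditional analogue as a separate input. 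Past that, the whole argument is bilinear bookkeeping.
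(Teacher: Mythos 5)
Your proposal is correct and follows essentially the same route as the paper's own proof: expand the conditional covariance and variances over the OLS decompositions, use $\mathbf{A2}$ to kill the cross-terms and $\mathbf{A1}$ to freeze the residual second moments, then eliminate the residual moments via the $\mathbb{P}(\A)=1$ instance of the same identity. Your closing remark is well taken --- the paper likewise invokes the unconditional identity by asserting that $\mathbf{A1}$ and $\mathbf{A2}$ hold when $\mathbb{P}(\A)=1$, which for $\mathbf{A2}$ with $Z_1\neq Z_2$ is a genuine additional requirement rather than an automatic consequence of OLS orthogonality, exactly as you flag.
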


\begin{proof}
To be found in Appendix~\ref{PrfThm1}.
\end{proof}

To obtain a better intuition of the result, we simplify the problem and assume that the variables are scaled and such that $Z_1=Z_2=Z$, with $Z$ univariate. Then~\eqref{eq:Thm1} becomes
\begin{equation}
\rho_{XY\mid \A}= \dfrac{\rho_{XY} + \rho_{XZ} \rho_{YZ}\delta}{\left[ 1 + \rho_{XZ}^2\delta\right]^{\frac{1}{2}} \left[ 1 + \rho_{YZ}^2\delta\right]^{\frac{1}{2}}},
\label{eq:Thm1Simple}
\end{equation}
with $\delta = \sigma_{Z\mid \A}^2/\sigma_Z^2-1$. This form shows that $\rho_{XY\mid \A}$ is driven by the conditional variance, and more precisely by $\delta$, the normalized shift in conditional variance between inside and outside of $\A$. In the limit case where $\mathbb{P}\left(\A\right) = 0$, we recover the recursive equation to compute partial correlation $\rho_{XY\mid Z}$, allowing us to relate the two dependence parameters.

\begin{figure}[t]
{\fontsize{6.1pt}{.7em}\selectfont
\hfill
\begin{minipage}[t]{0.33\textwidth}
	$\qquad$Normal(0,$\eta$):~$\theta=$~(0$\cdot$2,0$\cdot$4,0$\cdot$6,1)$\vphantom{\chi^2_\eta}$
	\includegraphics[width=\textwidth,height=\textwidth]{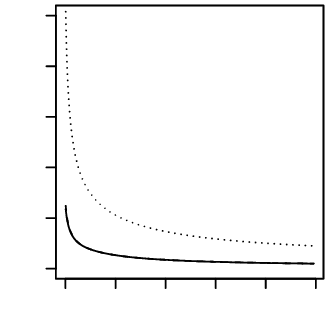}\vspace{-1.5\baselineskip}
	
	\noindent$\qquad$Normal(0,$\eta$):~$\theta=$~(0$\cdot$6,0$\cdot$7,0$\cdot$8,1e3)$\vphantom{\chi^2_\eta}$
       \includegraphics[width=\textwidth,height=\textwidth]{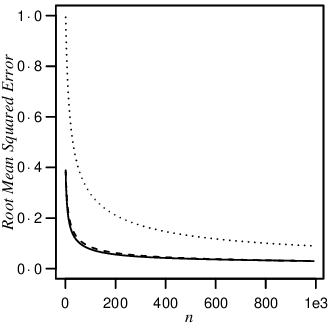}
\end{minipage}\hspace{-.25cm}
\begin{minipage}[t]{0.33\textwidth}
	$\qquad$Student-$t_\eta$:~$\theta=$~(0$\cdot$4,0$\cdot$5,0$\cdot$6,5)$\vphantom{\chi^2_\eta}$
	\includegraphics[width=\textwidth,height=\textwidth]{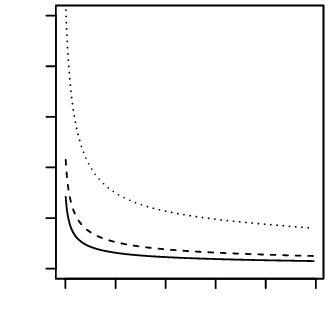}\vspace{-1.5\baselineskip}
	
	\noindent $\qquad$Student-$t_\eta$:~$\theta\!=\!$~(0$\cdot$2,0$\cdot$3,0$\cdot$4,30)$\vphantom{\chi^2_\eta}$
	\includegraphics[width=\textwidth,height=\textwidth]{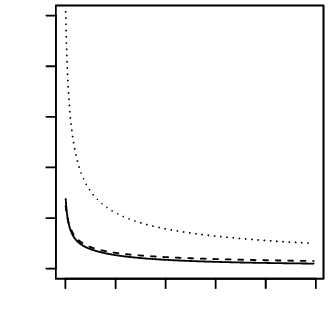}
\end{minipage}\hspace{-.25cm}
\begin{minipage}[t]{0.33\textwidth}
	$\quad$Normal(0,$\chi^2_\eta$):~$\theta=$~(0$\cdot$2,0$\cdot$6,0$\cdot$6,10)$\vphantom{\chi^2_\eta}$
	\includegraphics[width=\textwidth,height=\textwidth]{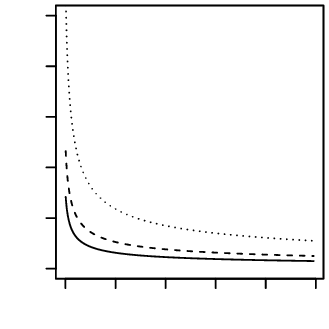}\vspace{-1.5\baselineskip}
	
	\noindent $\quad$ Normal(0,$\chi^2_\eta$):~$\theta=$~(0$\cdot$2,0$\cdot$3,0$\cdot$4,1)$\vphantom{\chi^2_\eta}$
	\includegraphics[width=\textwidth,height=\textwidth]{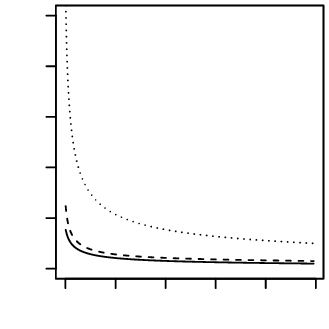}
\end{minipage}
\hfill\hfill}
	\caption{Root mean squared error of an event conditional correlation curve estimate, as shown in Fig~\ref{fig:GCorrel}, to its true value for different sample sizes $n$ for: the proposed method (dashed line), using sub-sampling (dotted line) and using well specified maximum likelihood initiated at the true value (solid line); each point is produced using 1000 simulations. We consider three distributions, each with two sets of parameters $\theta = \left(\rho_{XY},\rho_{XZ},\rho_{YZ},\eta\right)$.}
  \label{MC_1}
\end{figure}

Finally, \eqref{eq:Thm1Simple} recovers the results of~\cite{Boyer, Ladoucette, Forbes, Packham}, connecting our result with theirs. However, because all these works focus on risk measures in finance, they make field specific assumptions on the nature of the condition $\A$ and on the distribution of $(X,Y,Z)$, while we work under minimal assumptions.

We now draw from Theorem~\ref{Thm1} a new estimator of $\rho_{XY\mid \A}$. In the following we denote using hat estimators: for instance $\smash{\widehat \beta_{XY}}$ is an estimator of $\beta_{XY}$.

\begin{corollary}\label{Col1}
Under $\mathbf{A1}$, $\mathbf{A2}$ and the assuming that $\widehat \cov(X,Y)$, $\widehat \sigma_X$, $\widehat \sigma_Y$, $\widehat\beta_{XZ_1}$, $\widehat\beta_{YZ_2}$, $\widehat{\delta_\A}(Z_1,Z_1)$, $\widehat{\delta_\A}(Z_2,Z_2)$ and $\widehat{\delta_\A}(Z_1,Z_2)$ are $\sqrt{n}$-consistent, asymptotically normal estimators, we have that
\begin{equation}
\dfrac{ \widehat\cov(X,Y)+\widehat\beta_{XZ_1}^{\top}\widehat{\delta_\A}(Z_1,Z_2)\widehat\beta_{YZ_2}}{\left[ \widehat \sigma_X^2+\widehat\beta_{XZ_1}^{\top}\widehat{\delta_\A}(Z_1,Z_1)\widehat\beta_{XZ_1}\right]^{\frac{1}{2}}\left[ \widehat \sigma_Y^2+\widehat\beta_{YZ_2}^{\top}\widehat{\delta_\A}(Z_2,Z_2)\widehat\beta_{YZ_2}\right]^{\frac{1}{2}}}
\label{eq:Cor1}
\end{equation}
is a $\sqrt{n}$-consistent, asymptotically normal, estimator of $\rho_{XY\mid \A}$.
\end{corollary}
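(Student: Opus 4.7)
The plan is to apply the multivariate delta method to the smooth map underlying Theorem~\ref{Thm1}. Collect the unknown population quantities into
$\theta=(\cov(X,Y),\sigma_X^2,\sigma_Y^2,\beta_{XZ_1},\beta_{YZ_2},\delta_\A(Z_1,Z_1),\delta_\A(Z_2,Z_2),\delta_\A(Z_1,Z_2))$
and let $g(\theta)$ denote the right-hand side of~\eqref{eq:Thm1}, so that Theorem~\ref{Thm1} reads $\rho_{XY\mid\A}=g(\theta)$. The statistic in~\eqref{eq:Cor1} is precisely $g(\widehat\theta)$ for $\widehat\theta$ the vector of plug-in estimators hypothesized in the corollary, so the task reduces to controlling how the smooth transformation $g$ propagates the asymptotics of $\widehat\theta$.

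I would next check that $g$ is continuously differentiable on an open neighborhood of $\theta$. The only candidate singularities are the two square-root denominators, but Theorem~\ref{Thm1} identifies these as $\sigma^2_{X\mid\A}$ and $\sigma^2_{Y\mid\A}$, which are strictly positive under the non-degeneracy implicit in the definition of $\rho_{XY\mid\A}$. Hence $g$ is smooth at $\theta$ and $\nabla g(\theta)$ is well defined.

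The assumption of the corollary provides marginal $\sqrt{n}$-consistency and asymptotic normality for each coordinate of $\widehat\theta$; the delta method, however, requires the joint statement $\sqrt{n}(\widehat\theta-\theta)\Rightarrow N(0,V)$ for some covariance $V$. I would establish this by noting that in every standard implementation the eight plug-in estimators are smooth functions of the sample moments of $(X,Y,Z_1,Z_2,\mathbf{1}_\A)$ computed on a common i.i.d.\ sample; a single multivariate CLT on these sample moments combined with a coordinatewise delta-method argument therefore delivers joint asymptotic normality. This upgrade from marginal to joint convergence is the main obstacle, and it is the one place where the proof must look beyond the literal statement of the hypotheses.

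With those two ingredients secured, the multivariate delta method yields $\sqrt{n}(g(\widehat\theta)-g(\theta))\Rightarrow N(0,\nabla g(\theta)^\top V\nabla g(\theta))$, which is exactly the claimed $\sqrt{n}$-consistency and asymptotic normality of~\eqref{eq:Cor1} as an estimator of $\rho_{XY\mid\A}$. Since $\nabla g$ and the entries of $V$ are continuous functions of $\theta$, plugging $\widehat\theta$ back in furnishes a consistent estimator of the asymptotic variance, making the result directly usable for inference.
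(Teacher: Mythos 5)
Your proof is correct and follows essentially the same route as the paper's: both apply the multivariate delta method to the smooth map defined by the right-hand side of Theorem~\ref{Thm1}, the paper doing so explicitly only in the simplified case $Z_1=Z_2=Z$ scalar with $\theta=(\rho_{XY},\rho_{XZ},\rho_{YZ},\delta)$ and $\phi(a,b,c,d)=(a+bcd)/((1+b^2d)^{1/2}(1+c^2d)^{1/2})$. Your explicit upgrade from marginal to joint asymptotic normality of the plug-in estimators addresses a point the paper passes over silently by simply positing an asymptotic covariance matrix $\Sigma_\theta$.
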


\begin{proof}
The result is a direct application of Theorem~\ref{Thm1} and the delta method. We detail the proof in Appendix~\ref{PrfCol1}. Figure~\ref{MC_1} shows a small-sample study corroborating this result. To obtain an $\sqrt{n}$-consistent estimator of the variances---the $\smash{\widehat{\delta_\A}}$-s---we estimate the joint distribution of $(Z_1,Z_2)$ using the whole sample and infer from that estimate the conditional covariance matrices. In all six cases our estimator almost realizes the Cram\'er-Rao bound: on average it only adds a 0$\cdot$04 error in correlation estimates compared to the Cram\'er-Rao case.
\end{proof}

It is of interest to tell whether two event conditional correlation estimates are significantly different or not. To this end we must produce confidence intervals for our estimator. Since Corollary~\ref{Col1} is derived through the delta method, the variance stabilizing transformation, or inverse delta method, should be used~\cite{Fisher, Hotelling}. However, this method is not applicable here as it cannot be computed in closed form~\cite{vanderVaart}. Nevertheless, resampling methods can be used, which we recommend.
 
\section{Implied unconditional correlation estimator}\label{UCC}
In this section we present an estimator of unconditional correlation based on an $\A$-sample (a sample where $\A$ is verified for all observations.) The formulation of this estimator is not intuitive, but as detailed below it is in fact driven by the same adjustment for conditional variance shift as~\eqref{eq:Thm1}. We are not aware of any other estimator to compare our result with, but in Fig~\ref{MC_2} we present a small-sample study where our estimator almost realizes the Cram\'er-Rao bound: On average it only adds a 0$\cdot$02 relative error in correlation estimates compared to the Cram\'er-Rao case.

\begin{theorem}\label{Col2}
Assuming $\mathbf{A1}$, $\mathbf{A2}$ we have that
\begin{multline}
\rho_{XY} = \rho_{XY\mid \A} \left(1+R_{XZ_1}^{\top}\bar\delta_\A\left(Z_1,Z_1\right)R_{XZ_1}\right)^{\frac{1}{2}}\left(1+R_{YZ_2}^{\top}\bar\delta_\A\left(Z_2,Z_2\right)R_{YZ_2}\right)^{\frac{1}{2}}\\
-R_{XZ_1}^{\top}\bar\delta_\A\left(Z_1,Z_2\right)R_{YZ_2},
\label{eq:Col2}
\end{multline}
with for all $i,j\leq2,\ \bar\delta_\A(Z_i,Z_j)=diag(\Sigma_{Z_i}^{\ })\Sigma_{Z_i}^{-1}\delta_\A(Z_i,Z_j)\Sigma_{Z_j}^{-1}diag(\Sigma_{Z_j}^{\ })$ and 
\begin{equation*}
\begin{cases}
R_{XZ_1} 
&= \left\{\rho_{XZ_{1i}\mid\A}\left[1+\left(\sigma_{Z_{1i}\mid\A}^2\sigma_{Z_{1i}}^{-2}-1\right)\left(1-\rho_{XZ_{1i}\mid\A}^{2}\right)\right]^{-\frac12}\right\}_{i\leq dim(Z_1)}\\
R_{YZ_2} 
&= \left\{\rho_{YZ_{2j}\mid\A}\left[1+\left(\sigma_{Z_{2j}\mid\A}^2\sigma_{Z_{2j}}^{-2}-1\right)\left(1-\rho_{YZ_{2j}\mid\A}^{2}\right)\right]^{-\frac12}\right\}_{j\leq dim(Z_2)}.
\end{cases}
\end{equation*}
\end{theorem}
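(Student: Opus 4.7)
The plan is to invert the multivariate identity of Theorem \ref{Thm1}: solve \eqref{eq:Thm1} algebraically for $\cov(X,Y)$, normalize by $\sigma_X\sigma_Y$ to obtain $\rho_{XY}$, and then express every remaining unconditional moment in terms of $\A$-conditional analogues together with the unconditional second moments of $Z_1$ and $Z_2$. Clearing the denominator of \eqref{eq:Thm1} and dividing through by $\sigma_X\sigma_Y$ (absorbing one factor of $\sigma_X^{-1}$ and one of $\sigma_Y^{-1}$ inside each square root) produces
\begin{equation*}
\rho_{XY} = \rho_{XY\mid\A}\sqrt{1+Q_1}\sqrt{1+Q_2} - Q_{12},
\end{equation*}
with $Q_1 = \beta_{XZ_1}^{\top}\delta_\A(Z_1,Z_1)\beta_{XZ_1}/\sigma_X^{2}$, $Q_2$ the analogue for $Y$, and $Q_{12} = \beta_{XZ_1}^{\top}\delta_\A(Z_1,Z_2)\beta_{YZ_2}/(\sigma_X\sigma_Y)$.

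Next, I convert each $Q$ into a (bi)linear form in the unconditional correlation vectors. Writing $\beta_{XZ_1} = \Sigma_{Z_1}^{-1}\Sigma_{XZ_1}$ and decomposing $\Sigma_{XZ_1} = \sigma_X\,\mathrm{diag}(\Sigma_{Z_1})\varrho_{XZ_1}$ with $\varrho_{XZ_1} = \{\rho_{XZ_{1i}}\}_i$, the factor $\sigma_X$ cancels and direct matrix substitution shows $Q_1 = \varrho_{XZ_1}^{\top}\bar\delta_\A(Z_1,Z_1)\varrho_{XZ_1}$, and analogously for $Q_2$ and $Q_{12}$. This reproduces \eqref{eq:Col2} up to replacing $R_{XZ_1}, R_{YZ_2}$ by the unconditional correlation vectors $\varrho_{XZ_1},\varrho_{YZ_2}$.

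The final identification $\varrho_{XZ_{1i}} = R_{XZ_{1i}}$ (and symmetrically for $Y$) follows from applying Theorem \ref{Thm1} at the scalar level to the triple $(X, Z_{1i}, Z_{1i})$: this collapses \eqref{eq:Thm1} to the univariate form \eqref{eq:Thm1Simple}, yielding
\begin{equation*}
\rho_{XZ_{1i}\mid\A} = \frac{\rho_{XZ_{1i}}\sqrt{1+\delta_i}}{\sqrt{1 + \rho_{XZ_{1i}}^{2}\delta_i}}, \qquad \delta_i = \frac{\sigma_{Z_{1i}\mid\A}^{2}}{\sigma_{Z_{1i}}^{2}} - 1.
\end{equation*}
Solving this quadratic algebraically for $\rho_{XZ_{1i}}$ (with signs matched, which is unambiguous since $1 + \delta_i > 0$) produces precisely the formula defining $R_{XZ_{1i}}$; the identical argument handles each component of $\varrho_{YZ_2}$. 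Substituting back completes the proof.

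The main obstacle is the matrix bookkeeping of the second step: one must verify carefully that the $\sigma_X^{-1}$, $\mathrm{diag}(\Sigma_{Z_i})$ and $\Sigma_{Z_i}^{-1}$ factors combine exactly into the $\bar\delta_\A$ prescribed in the statement, since the off-diagonal structure of $\Sigma_{Z_i}$ and the diagonal rescaling interact non-trivially. A secondary subtlety is justifying the scalar applications of Theorem \ref{Thm1} in the third step: $\mathbf{A1}$ is automatic since $\epsilon_{Z_{1i} Z_{1i}} \equiv 0$, while the reduced form of $\mathbf{A2}$ requires $\cov(Z_{1i}, \epsilon_{XZ_{1i}} \mid \A) = 0$, a condition that should be recognized as implicit in the modelling hypotheses already in force.
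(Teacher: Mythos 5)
Your proposal is correct and follows essentially the same route as the paper's own proof: normalize \eqref{eq:Thm1} by $\sigma_X\sigma_Y$, absorb the $\beta$'s and the $\mathrm{diag}(\Sigma_{Z_i})\Sigma_{Z_i}^{-1}$ factors into the bilinear forms $\bar\delta_\A$ acting on unconditional correlation vectors, and then invert the scalar instance of Theorem~\ref{Thm1} componentwise to trade $\rho_{XZ_{1i}}$ for $\rho_{XZ_{1i}\mid\A}$, which yields exactly $R_{XZ_1}$ and $R_{YZ_2}$. Your closing remark that the scalar applications of Theorem~\ref{Thm1} require $\mathbf{A1}$ and $\mathbf{A2}$ for the triples $(X,Z_{1i},Z_{1i})$ --- reducing to $\cov(Z_{1i},\epsilon_{XZ_{1i}}\mid\A)=0$, which is not literally entailed by the stated hypotheses --- is a fair point that the paper passes over silently.
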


\begin{proof}
To be found in Appendix~\ref{PrfCol2}. The proof consists in two steps: i) inverting~\eqref{eq:Thm1} gives~\eqref{eq:Col2}, ii) inverting it again to compute the entries of $\beta_{XZ_1}$ and  $\beta_{YZ_2}$ yields the expressions for $R_{XZ_1}$ and $R_{YZ_2}$, which gives the result.
\end{proof}

Equation~\ref{eq:Col2} does not make the link between conditional variance and event conditional correlation explicit. Rewriting it in the simplified case of~\eqref{eq:Thm1Simple}, shows that it is in fact based on exactly the same transformation as that in~\eqref{eq:Thm1}: with $\bar\delta = \sigma_{Z}^2/\sigma_{Z\mid \A}^2-1$, we have 
\begin{equation*}
\rho_{XY}=
	\dfrac{\rho_{XY\mid \A}+\rho_{XZ\mid \A}\rho_{YZ\mid \A}\bar\delta}{\left[1+\rho_{XZ\mid \A}^2\bar\delta\right]^{\frac{1}{2}}\left[1+\rho_{YZ\mid \A}^2\bar\delta\right]^{\frac{1}{2}}}.
\end{equation*}
In fact, in that specific setting, for any event $\A'$ and with $\tilde\delta = \sigma_{Z\mid\A'}^2/\sigma_{Z\mid \A}^2-1$, we have 
\begin{equation*}
\rho_{XY\mid\A'}=
	\dfrac{\rho_{XY\mid \A}+\rho_{XZ\mid \A}\rho_{YZ\mid \A}\tilde\delta}{\left[1+\rho_{XZ\mid \A}^2\tilde\delta\right]^{\frac{1}{2}}\left[1+\rho_{YZ\mid \A}^2\tilde\delta\right]^{\frac{1}{2}}}.
\end{equation*}

\begin{figure}[t]
{\fontsize{6.1pt}{.7em}\selectfont
\hfill
\begin{minipage}[t]{0.33\textwidth}
	$\qquad$Normal(0,$\eta$):~$\theta=$~(0$\cdot$2,0$\cdot$4,0$\cdot$6,1)$\vphantom{\chi^2_\eta}$
	\includegraphics[width=\textwidth,height=\textwidth]{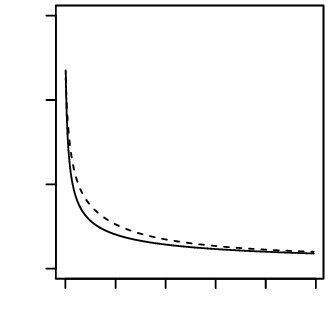}\vspace{-1.5\baselineskip}

	\noindent $\qquad$Normal(0,$\eta$):~$\theta=$~(0$\cdot$6,0$\cdot$7,0$\cdot$8,1e3)$\vphantom{\chi^2_\eta}$
	\includegraphics[width=\textwidth,height=\textwidth]{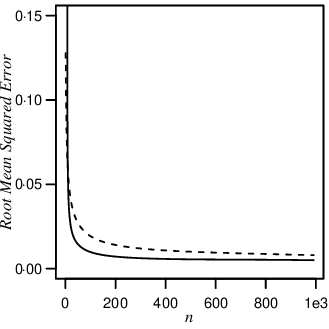}
\end{minipage}\hspace{-.25cm}
\begin{minipage}[t]{0.33\textwidth}
	$\qquad$Student-$t_\eta$:~$\theta=$~(0$\cdot$4,0$\cdot$5,0$\cdot$6,5)$\vphantom{\chi^2_\eta}$
	\includegraphics[width=\textwidth,height=\textwidth]{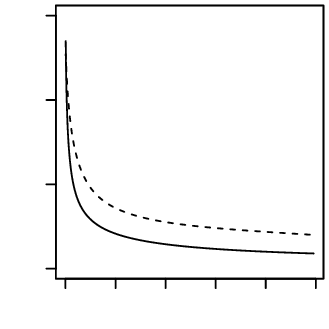}\vspace{-1.5\baselineskip}

	\noindent $\qquad$Student-$t_\eta$:~$\theta\!=\!$~(0$\cdot$2,0$\cdot$3,0$\cdot$4,30)$\vphantom{\chi^2_\eta}$
	\includegraphics[width=\textwidth,height=\textwidth]{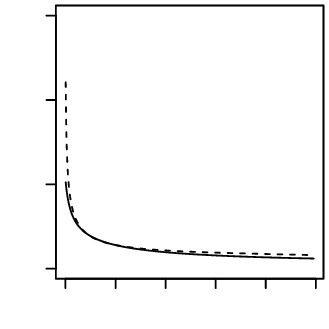}
\end{minipage}\hspace{-.25cm}
\begin{minipage}[t]{0.33\textwidth}
	$\quad$Normal(0,$\chi^2_\eta$):~$\theta=$~(0$\cdot$2,0$\cdot$6,0$\cdot$6,10)$\vphantom{\chi^2_\eta}$
	\includegraphics[width=\textwidth,height=\textwidth]{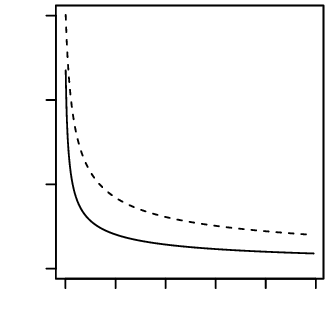}\vspace{-1.5\baselineskip}

	\noindent $\quad$Normal(0,$\chi^2_\eta$):~$\theta=$~(0$\cdot$2,0$\cdot$3,0$\cdot$4,1)$\vphantom{\chi^2_\eta}$
	\includegraphics[width=\textwidth,height=\textwidth]{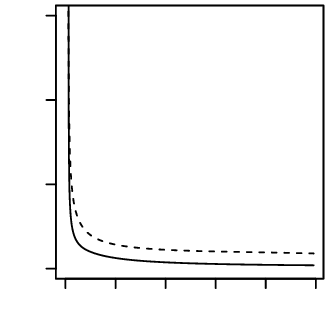}
\end{minipage}
\hfill\hfill}
	\caption{Root mean squared error of ten estimates based on sub-samples constructed as in Fig~\ref{fig:GCorrel} to the true value for different sample sizes $n$ for: the proposed method (dashed line) and using well specified maximum likelihood initiated at the true value (solid line). Otherwise this small-sample study is structured as that of Fig~\ref{MC_1}.}
	\label{MC_2}
\end{figure}

From Theorem~\ref{Col2} we naturally obtain an estimator of the unconditional correlation:
\begin{corollary}\label{Col3}
With a $\A$-sample, under $\mathbf{A1}$, $\mathbf{A2}$ and the additional assumption that $\widehat \rho_{XY\mid \A}$, $\widehat\beta_{XZ_1\mid \A}$, $\widehat\beta_{YZ_2\mid \A}$, $\widehat{\delta_\A}(Z_1,Z_1)$, $\widehat{\delta_\A}(Z_2,Z_2)$ and $\widehat{\delta_\A}(Z_1,Z_2)$ are $\sqrt{n}$-consistent, asymptotically normal estimators, using~\eqref{eq:Col2} yields a $\sqrt{n}$-consistent, asymptotically normal estimators of $\rho_{XY}$.
\end{corollary}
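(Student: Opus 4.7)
The plan is to mirror the proof of Corollary~\ref{Col1} and obtain the result as a direct application of the multivariate delta method to the identity~\eqref{eq:Col2}. Let $\theta$ collect the true values of the quantities for which estimators are assumed---$\rho_{XY\mid\A}$, the components of $\beta_{XZ_1\mid\A}$ and $\beta_{YZ_2\mid\A}$, the entries of $\delta_\A(Z_1,Z_1)$, $\delta_\A(Z_2,Z_2)$ and $\delta_\A(Z_1,Z_2)$---together with the marginal variances $\sigma_{Z_{1i}}^2$ and $\sigma_{Z_{2j}}^2$ needed to assemble the $\bar\delta_\A$ matrices and the vectors $R_{XZ_1}$, $R_{YZ_2}$; write $\widehat\theta$ for the corresponding stacked plug-in estimator, which is jointly $\sqrt n$-consistent and asymptotically normal by hypothesis. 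Since the univariate correlations $\rho_{XZ_{1i}\mid\A}$ and variance ratios $\sigma_{Z_{1i}\mid\A}^2/\sigma_{Z_{1i}}^2$ appearing inside $R_{XZ_1}$ (and analogously for $R_{YZ_2}$) are algebraic functions of entries of $\theta$, equation~\eqref{eq:Col2} exhibits $\rho_{XY}$ as a fixed map $g(\theta)$ built from finitely many additions, multiplications, quotients, and real square roots.

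Next I would verify that $g$ is continuously differentiable at the true value of $\theta$. Inverting~\eqref{eq:Thm1} shows that the two outer square-root arguments in~\eqref{eq:Col2} equal the strictly positive variance ratios $\sigma_X^2/\sigma_{X\mid\A}^2$ and $\sigma_Y^2/\sigma_{Y\mid\A}^2$, while each inner denominator entering $R_{XZ_1}$, of the form $1+(\sigma_{Z_{1i}\mid\A}^2/\sigma_{Z_{1i}}^2-1)(1-\rho_{XZ_{1i}\mid\A}^2)$, is bounded below by $\sigma_{Z_{1i}\mid\A}^2/\sigma_{Z_{1i}}^2>0$, and symmetrically for $R_{YZ_2}$. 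Under the implicit non-degeneracy required for the correlations appearing in~\eqref{eq:Col2} to be well-defined, all of these denominators are bounded away from zero, so $g$ is $C^1$ on a neighbourhood of the true $\theta$.

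The conclusion is then immediate from one application of the delta method: $\sqrt n\bigl(g(\widehat\theta)-g(\theta)\bigr)=\sqrt n(\widehat\rho_{XY}-\rho_{XY})$ converges in distribution to a centered Gaussian with variance $\nabla g(\theta)^\top\Sigma_\theta\nabla g(\theta)$, where $\Sigma_\theta$ denotes the asymptotic covariance of $\widehat\theta$, yielding the claimed $\sqrt n$-consistency and asymptotic normality. The main obstacle, exactly as in Corollary~\ref{Col1}, is the bookkeeping that rewrites $R_{XZ_1}$, $R_{YZ_2}$ and the $\bar\delta_\A$ matrices in terms of the assumed inputs and the verification of positivity of every denominator at the true parameter; both tasks are routine given the identity-inversion argument above, so I expect no further difficulty beyond what was already handled in the forward direction.
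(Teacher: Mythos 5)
Your proposal is correct and follows essentially the same route as the paper: the paper also proves this by a direct application of the delta method to the closed-form expression, reusing the map $\phi$ from the proof of Corollary~\ref{Col1} with $\theta=(\rho_{XY\mid\A},\rho_{XZ\mid\A},\rho_{YZ\mid\A},\bar\delta)$ in the simplified scalar case and asserting the extension to the multivariate setting. Your added care about the differentiability of the map and the positivity of the denominators is a welcome (if routine) supplement that the paper leaves implicit.
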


\begin{proof}
The result is a direct consequence of Theorem~\ref{Col2} and the delta method. A proof is presented in Appendix~\ref{PrfCol3} and small sample simulations are presented in Fig~\ref{MC_2}. To obtain $\sqrt{n}$-consistent estimators of the conditional covariances, the $\smash{\widehat{\delta_\A}}$-s, we use maximum likelihood with a truncated distribution on $(Z_1,Z_2)$ and infer from the estimate the corresponding covariance.
\end{proof}

\section{Examples}
We now use the tools developed in Theorems~\ref{Thm1} and ~\ref{Col2} to address three problems: non-linear regression, dependence testing, and financial dependence structure. In the first two examples we use synthetic datasets, and in the last one we will consider the NASDAQ-100 index within and without the recent financial crisis.

\subsection{Piecewise affine functional regression}\label{seg-reg}
\begin{figure}
  \hfill
  \begin{minipage}[t]{0.45\textwidth}
       \includegraphics[width=\textwidth,height=\textwidth]{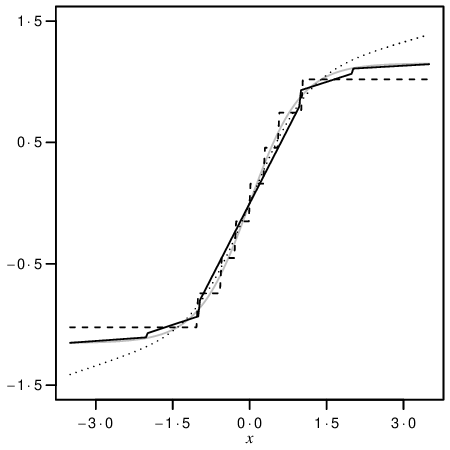}
  \end{minipage}
  \begin{minipage}[t]{0.45\textwidth}
       \includegraphics[width=\textwidth,height=\textwidth]{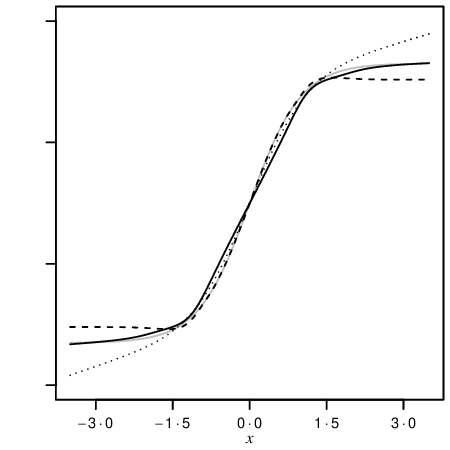}
  \end{minipage}
  \hfill\hfill
  \caption{Comparison of three regression methodologies: generalized additive regression (dotted line; produced with the package {\tt gam}~\cite{gamR}), tree regression (dashed lines; produced with the package {\tt tree}~\cite{tree}), and one produced using event conditional correlation (solid line, corresponding to $\smash{\hat f}$ in \eqref{eq:ecc_fun_reg}). On the left are the raw function estimates and on the right we present spline smoothed versions of the same estimates for fair comparison. In both cases, the true response function is plotted in continuous grey.}
  \label{Example}
\end{figure}

We compare three regression methodologies: generalized additive regression~\cite{GAM}, tree regression~\cite{CART}, and a new regression methodology we introduce that uses event conditional correlation. The synthetic dataset used consists in 1e4 realizations of two variables $X$ and $Y$ such that $Y$ is equal to $\tanh(X)+\epsilon$, with $\epsilon$ a centered Gaussian noise.

The regression estimate that uses event conditional correlation is obtained in three steps and builds on the concept of segmented regression~\cite{liu1997segmented}. First, we break the support of $X$ into several disjoint intervals of the same size, say the $\{A_i\}_i$. Then, using Corollary~\ref{Col1}, for each $A_i$ we estimate the correlation between $Y$ and $X$ conditionally on $X$ being in $A_i$ and obtain the corresponding regression slope $\hat\beta_i$. The final estimate is the piecewise affine function built using the regression slopes: 
\begin{equation}\label{eq:ecc_fun_reg}
\hat f: x\mapsto\sum_i1_{\{x\in A_i\}}[\hat m_i+\hat\beta_i (x-\hat m'_i)],
\end{equation}
where $(\hat m_i,\hat m_i')$ is the empirical mean of $(Y,X)$ conditionally on $X$ being in $A_i$.

We present the obtained estimates in Fig~\ref{Example} and observe that all three methods perform comparably well. However, we note that the estimate produced using conditional correlation is the only one to capture the tail of $Y$. This leads to a better root mean squared error (0$\cdot$051 compared to 0$\cdot$12 and 0$\cdot$081 for generalized additive regression and tree regression respectively).

\subsection{Test for bivariate dependence}\label{testing}
\begin{figure}
	\centering
	\includegraphics[width=.5\textwidth,height=.5\textwidth]{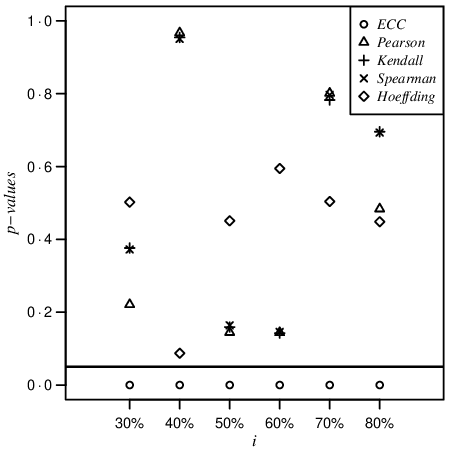}
  \caption{On the $x$-axis is $i$, the quantile considered, and on the $y$-axis is the $p$-value of the five considered tests indexed in the top right legend (ECC standing for event conditional correlation). The horizontal line is the 0\! $\cdot$05 threshold.}
\label{pvalues}
\end{figure}

We compare five bivariate tests for dependence between $X$ and $Y$ given a $\A$-sample. We find that event conditional correlation outperforms all the other methods in the considered synthetic examples.

The tests considered are as follows. The first four test for a dependence parameter being equal to zero. The considered parameters are: i) the implied unconditional correlation estimated using Corollary~\ref{Col3} assuming that the unconditional variance of the covariate is 1, ii) the Pearson, iii) Spearman, and iv) Kendall dependence parameters. The last considered test is the Hoeffding test, which is the univariate case of correlation distance~\cite{szŽkely2007}.

The total synthetic dataset consists of 5e3 realizations of $(X,Y,Z)$, a trivariate Gaussian vector so that all entries have unit variance and that $\rho_{XY}=-$0$\cdot$25 and $\rho_{YZ} = -\rho_{XZ}=$0$\cdot$50. We then apply the five considered tests to the $\A$-samples of $(X,Y)$ where $\A$ takes the same form as in Figure~\ref{fig:GCorrel}:
\[\A = \{Z\in [Q_Z(i-0\cdot1),Q_Z(i)]\},\] 
for $i\in\{$0$\cdot$1,0$\cdot$2,$\dots$,1$\}$. Then, each $\A$-sample consists of 5e2 realizations.

Only the test based on our results succeeds in detecting the dependence between $X$ and $Y$ (see Fig~\ref{pvalues}). Because of the sampling constraint, the co-movement of $X$ and $Y$ are limited, making classical tests unable to detect the dependence. The estimator of Corollary~\ref{Col3} allows to magnify this dependence, and hence detect it.

\subsection{Financial dependence network}\label{contagion}
We now turn to one of the original objectives of our study of event conditional correlation: financial assets dependence. One prominent question since the contribution of~\cite{Forbes} is the following: Do crises alter the dependence structure of financial assets? By exhibiting a special case of Theorem~\ref{Thm1}, ~\cite{Forbes} argued that it does not. This result raised a long and involved controversy that we do not review here.

Our new results allows us to consider the same problem in a more general setting: i) we can use varied covariates to describe the shift in dependence structure, whereas~\cite{Forbes} considered dependence at the bivariate level, of the form $\rho_{XY\mid Y>y}$ for some threshold $y$; ii) we can use the correlation structure as a whole, instead of considering each correlation parameters independently. We will do so for the NASDAQ-100, a widely traded financial index representing more than 15\% of US equity market ({\tt nasdaqtrader.com}, May 2015).

\subsubsection{Dataset} We consider the daily closing quotes between 2007-07-23 and 2015-06-30 of all the NASDAQ-100 components at the last date, totaling 2e3 trading days. We only kept the components that were already listed on the first date of the dataset, keeping 94 out of 100\footnote{All quotes were obtained using the {\tt quantmod} package~\cite{quantmod}.}. For each component, we extract the normalized residuals of the log-returns\footnote{We do so using a auto regressive model for the log returns (AR model), and model residuals with the generalized auto-regressive conditional heteroskedasticity model (GARCH model). We use the {\tt fGarch} package~\cite{fGarch} to jointly estimate these models.}. We call $X$ the matrix containing all the obtained normalized residuals.

The covariate we use is the CBOE-NASDAQ-100 volatility index over the same period; we call it $Z$. This index describes at each date the level of risk of the NASDAQ-100. To obtain a precise description of the dependence, we will use $Z$, $Z^2$, $Z^3$ and the lag of these three quantities\footnote{We selected these lags and powers using classical analysis of variance methods.}. We call $W$ the full matrix of regressors.

\subsubsection{Correcting the correlations} 
We define two regimes: the {\em crisis} regime, where $Z$ is in it's higher quartile, and the complementary, the {\em stable} regime. Changing from the higher quartile to an other threshold does not substantially affect the results.

Using each subsample independently, we can estimate the correlation matrices of $X$ inside and outside of the crisis regime. If we do so, we find that more than 97\% of correlation coefficients are significantly different across regimes\footnote{We use resampling methods and the $t$-test. In this case, and throughout the remainder of this analysis, we use Bonferroni correction and 0$\cdot$1\% significance levels.}.

Using Theorem~\ref{Col2}, we can correct for the bias induced by sampling: We use both sample independently to produce estimate of the unconditional correlations, the correction being made using $W$ as covariate. After correction, we find that less than 83\% of the correlations are significantly different.

Thus, as in~\cite{Forbes}, we observe that ignoring subsampling inflates the apparent shift in dependence structure across regimes. However, as opposed to~\cite{Forbes}, there remains an important shift between regimes. Using Theorem~\ref{Thm1} will allow us to describe this shift in dependence structure.

\subsubsection{Describing the contagion}
We now consider the network linking the components of the NASDAQ-100. Considering this network allows to describe dependence as a whole and does not require multiple testing. The nodes of the network are the components of the NASDAQ-100. The weight of the edge between two nodes is the partial correlation between the two components: the correlation after removing linear effects from all the other components.

We first compare the networks linking NASDAQ-100 components within and without the crisis regime, this is done while correcting for sampling using Theorem~\ref{Col2}. To compare these networks we will use the nodes' centrality scores. (These quantities describe the importance of each node in the network in terms of how likely a random walk over the network is to visit that node~\cite{newman2010networks}.) More precisely, we will compare the sample average and standard deviation of the centrality scores of all nodes in the network\footnote{These moments are tied to the first eigenvector of the covariance matrix of $X$~\cite{newman2010networks}. Thus, by continuity arguments and using the Delta method, these moment estimates are expected to be asymptotically normal. We used the Shapiro-Normality-Test and failed to reject that the bootstrap distribution is normal in all cases.}.

\begin{figure}[t]
	\includegraphics{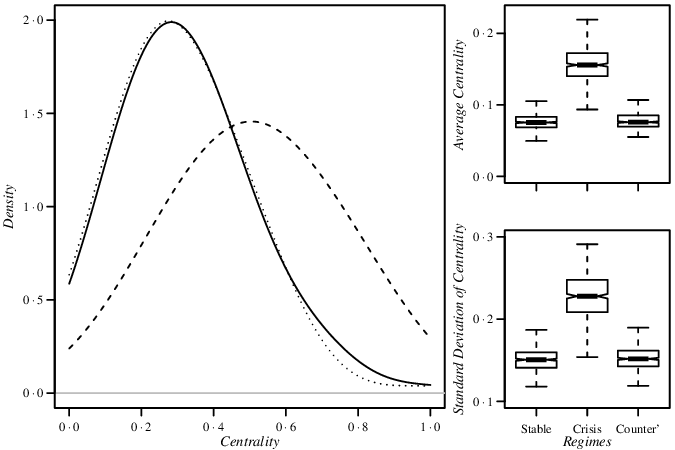}
	\caption{Centrality of NASDAQ across regimes. On the left is the empirical density of the NASDAQ components' centrality scores across regimes (stable, crisis and counterfactual regimes in solid, dashed and dotted lines respectively.) On the right is the box-plot of bootstrap samples of the average and standard deviation of the networks centrality under the three regimes: stable, crisis and counterfactual (denoted counter').}
\label{nasdaq}
\end{figure}

We find that both the sample averages and standard deviations of centrality show marked upward shifts between the stable and crisis regimes (see Fig~\ref{nasdaq}.) Further, since the network estimates are already corrected for sampling and the level of risk, the observed shifts must be caused by some other phenomena.

We explore this idea by evaluating the average and standard deviation of centrality in a counterfactual network. To build the counterfactual network, we use the realization from the stable regime along with Theorems~\ref{Thm1} and~\ref{Col2} to produce an estimate of the network we would observe if the variance of $Z$ was increased by $\delta(\sigma_{Z\mid \text{crisis}}^2-\sigma_{Z\mid \text{stable}}^2)$. In Fig~\ref{nasdaq} we set $\delta=5$ and observe that the counterfactual network does not replicate the centrality distribution observed during the crisis regime. Other values of $\delta$ do not affect this observation.

To conclude, we showed that: i) the observed change in dependence structure between the stable and crisis regimes must be caused by contagion and that ii) the structure observed during the crisis regime cannot be replicated by stressing the dependence structure of the stable regime. Interestingly, between the stable and crisis regimes, the network is moving from a state where most nodes have small centrality (except a select few), to a state where centrality is on average higher, but otherwise much more spread out. Then, although a few leading components may drive the behavior of the NASDAQ during the stable regime, this does not appear to be the case during the crisis regime. We expect that this effect, beyond the increase in variance, makes the market much harder to predict as it presents no clear leading variable.

\section{Discussion}\label{Conc}
Our results show that what drives correlation variations across events is the shift in conditional variances across events. It follows that by estimating the conditional variance shift, or by assuming a given value for it, we can estimate and compare correlation conditionally to any events using any partial sample. This is done while making no assumption on the dependence occurring between $X$ and $Y$. The only requirement of the method is to possess covariates able to describe the said dependence across conditions (the $(Z_1,Z_2)$.) Furthermore, the proposed estimators are consistent, asymptotically normal and display good small sample properties (see Figs~\ref{MC_1} and~\ref{MC_2}.)

These results have direct methodological applications. We provide three examples: one for non-linear regression, one for dependence testing, and one for financial networks. Event conditional correlation based approaches prove more powerful than comparable methods in the first two examples. In the last example, we characterized the occurrence of contagion (structural shift in the dependence structure) during the 2007--2011 financial crisis. Furthermore, we qualitatively described the uncovered crisis regime using counterfactuals.

Importantly, our last example exhibits a hidden consequence of Theorems~\ref{Thm1} and~\ref{Col2}: that the leading eigenvalues and eigenvectors of covariance matrices are robust to partial samples. To see this, consider a random vectors $X$ and a covariate $Z$. A direct consequence of Theorem~\ref{Thm1} is that $\Delta = \Sigma_X-\Sigma_{X\mid\A}$ is of rank at most $dim(Z)$. Then, for small $dim(Z)/dim(X)$, the effect of conditioning by $\A$ is very limited on the eigenvalues and eigenvectors of the sample covariance matrix. However, if the dimensions of $X$ and $Z$ are comparable, the effects cannot be neglected. We provide an example of this later point in dimension 3 in Fig~\ref{fig:PCA}.

An interesting final case to consider is when the dimension of $X$ goes to infinity~\cite{gao2015scca}. Because of its lower dimensionality, $\Delta$ will be much less affected than $\Sigma_X$ by this high dimensional setting. We conjecture that in this setting spectral methods become asymptotically frail to conditioning, even when the dimension of $Z$ remains fixed.

\appendix
\section*{Appendix}
\begin{proof}[Proof of Theorem~\ref{Thm1}]
\label{PrfThm1}
We will proceed in two steps: first we will compute the covariance of $X$ and $Y$ knowing $\A$ and then proceed to compute the variance of $X$ and $Y$ knowing $\A$. In the following we will put $\A$ as index to operators used conditionally to $\A$: for instance $\E_{\A}\left[X\right]=\E\left[X\mid\A\right]$.

\paragraph{Covariance}
\begin{equation*}
\begin{array}{l c l}
{\cov}_{\A}\left(X,Y\right)
&=& \E_{\A}\left[\left(X-\E_{\A}\left[X\right]\right)^{\top}\left(Y-\E_{\A}\left[Y\right]\right)\right]\\[10pt]
&=& \E_{\A}\left[\left( Z_1 \beta_{XZ_1}+ \epsilon_{XZ_1}-\E_{\A}\left[ Z_1 \beta_{XZ_1} + \epsilon_{XZ_1}\right]\right)^{\top}\right.\\[10pt]
&&\quad\qquad\left.\left( Z_2 \beta_{YZ_2} + \epsilon_{YZ_2}-\E_{\A}\left[ Z_2 \beta_{YZ_2} + \epsilon_{YZ_2}\right]\right)^{\vphantom{P}} \right]\\[10pt]
&=& \beta_{XZ_1}^\top \E_{\A}\left[\left(Z_1-\E_{\A}\left[Z_1\right]\right)^\top\left(Z_2-\E_{\A}\left[Z_2\right]\right)\right]\beta_{YZ_2}\\[10pt]
&&\ \ \ +\  \E_{\A}\left[\left(Z_1-\E_{\A}\left[Z_1\right]\right)\left(\epsilon_{YZ_2}-\E_{\A}\left[\epsilon_{YZ_2}\right]\right)\right]\beta_{XZ_1}\\[10pt]
&&\ \ \ +\  \E_{\A}\left[\left(Z_2-\E_{\A}\left[Z_2\right]\right)\left(\epsilon_{XZ_1}-\E_{\A}\left[\epsilon_{XZ_1}\right]\right)\right]\beta_{YZ_2}\\[10pt]
&&\ \ \ +\ \E_{\A}\left[\left(\epsilon_{XZ_1}-\E_{\A}\left[\epsilon_{XZ_1}\right]\right)\left(\epsilon_{YZ_2}-\E_{\A}\left[\epsilon_{YZ_2}\right]\right)\right].
\end{array}
\end{equation*}
Under $\mathbf{A1}$ and $\mathbf{A2}$, we can simplify the above equation to:
\begin{equation}
{\cov}_{\A}\left(X,Y\right) =  \beta_{XZ_1}^\top {\cov}_{\A}\left(Z_1,Z_2\right)   \beta_{YZ_2}+ {\cov}(\epsilon_{XZ_1},\epsilon_{YZ_2}).
\label{cov1}
\end{equation}
Let use now compute ${\cov}(\epsilon_{XZ_1},\epsilon_{YZ_2})$. To do so we use the fact that~\eqref{cov1} is verified for any event $\A$ such that $\mathbf{A1}(X,Y,Z_1,Z_2,\A)$ and $\mathbf{A2}(X,Y,Z_1,Z_2,\A)$ are verified. This is the case if $\mathbb{P}\left(\A\right) = 1$, so that:
\begin{equation*}
{\cov}(X,Y) =  \beta_{XZ_1}^\top {\cov}\left(Z_1,Z_2\right)\beta_{YZ_2} + {\cov}(\epsilon_{XZ_1},\epsilon_{YZ_2}),
\end{equation*}
and we obtain:
\begin{equation}
{\cov}(\epsilon_{XZ_1},\epsilon_{YZ_2}) =  {\cov}(X,Y) - \beta_{XZ_1}^{\top} {\cov}\left(Z_1,Z_2\right)\beta_{YZ_2}.
\label{cov2}
\end{equation}
Hence, merging~\eqref{cov1} and~\eqref{cov2} we obtain:
\begin{equation}
{\cov}_{\A}\left(X,Y\right) =  {\cov}(X,Y) +  \beta_{XZ_1}^{\top} \delta_\A(Z_1,Z_2)  \beta_{YZ_2}.
\label{cov}
\end{equation}

\paragraph{Variance}
\begin{eqnarray}
{\var}_{\A}\left(X\right) &=&  {\var}_{\A}\left[ Z_1 \beta_{XZ_1}+\epsilon_{XZ_1}\right]\nonumber\\
&=&  \beta_{XZ_1}^\top {\var}_{\A}\left(Z_1\right)  \beta_{XZ_1}\ +  {\var}\left(\epsilon_{XZ_1}\right).
\label{var1}
\end{eqnarray}
We make the simplification using $\mathbf{A1}$. Let us now compute the variance of $\epsilon_{XZ_1}$. To do so we use the fact that~\eqref{var1} is verified for any event $\A$ such that $\mathbf{A1}(X,Y,Z_1,Z_2,\A)$ and $\mathbf{A2}(X,Y,Z_1,Z_2,\A)$ are verified. This is the case if $\mathbb{P}\left(\A\right) = 1$, then the above equation writes:
\begin{equation*}
{\var}(X) =   \beta_{XZ_1}^\top {\var}\left[Z_1\right]   \beta_{XZ_1} + {\var}\left(\epsilon_{XZ_1}\right),
\end{equation*}
so that:
\begin{equation}
{\var}\left(\epsilon_{XZ_1}\right) = {\var}\left(X\right) -  \beta_{XZ_1}^\top {\var}\left(Z_1\right)  \beta_{XZ_1}.
\label{var2}
\end{equation}
By merging~\eqref{var1} and~\eqref{var2}:
\begin{eqnarray}
{\var}_{\A}\left(X\right) &=&  {\var}\left(X\right) +  \beta_{XZ_1}^{\top}\delta_\A(Z_1,Z_1)  \beta_{XZ_1},\nonumber\\
{\var}_{\A}\left(Y\right) &=&  {\var}\left(Y\right) +  \beta_{YZ_2}^{\top}\delta_\A(Z_2,Z_2)  \beta_{XZ_2}.
\label{var}
\end{eqnarray}
The result for $Y$ is obtained similarly.

Merging the~\eqref{cov} and~\eqref{var} and using that both $X$ and $Y$ are normed we obtain~\eqref{eq:Thm1}.
\end{proof}

\begin{proof}[Proof of Corollary~\ref{Col1}]\label{PrfCol1}
For simplicity we will work here in the simplified case presented after Theorem~\ref{Thm1}, where the variables are normed, $Z_1 = Z_2 = Z$ and $Z$ is univariate. The result and the proof extends directly to the general setting of Corollary~\ref{Col1}.

We call $\theta$ the vector composed of the parameters required to compute the $\rho_{XY\mid\A}$: $\theta = \left(\rho_{XY},\rho_{XZ},\rho_{YZ},\delta\right)$. We denote $\hat \theta$ an estimator of $\theta$ that converges at rate $n^{1/2}$ as in the statement of Corollary~\ref{Col1} and $\Sigma_{\theta}$ the asymptotic covariance matrix of $\theta$.

Let $\phi:R^4\rightarrow R$ be the map defined by:
\begin{equation*}
\phi(a,b,c,d)  = \dfrac{a + b c  d}{(1+b^2 d)^{\frac{1}{2}}(1+c^2 d)^{\frac{1}{2}}}.
\end{equation*}
The map $\phi$ is such that $\phi(\theta) = \rho_{XY\mid \A}$. We denote $\nabla \phi$ the gradient of $\phi$:
\begin{equation*}
\nabla \phi = \left(\dfrac{\partial \phi}{\partial a}, \dfrac{\partial \phi}{\partial b}, \dfrac{\partial \phi}{\partial c}, \dfrac{\partial \phi}{\partial d}\right).
\end{equation*}

Then under our assumptions we have that $\phi(\hat \theta)$ is an asymptotically normal estimator of $\phi(\theta)$ converging at the same rate $\nu$, and of asymptotic variance $\nabla \phi  \Sigma_{\theta}  \nabla \phi^{\top}$ as a direct application of the Delta Method, see~\cite[p. 30]{vanderVaart} (with the same notation).
\end{proof}

\begin{proof}[Proof of Theorem~\ref{Col2}]\label{PrfCol2}
We start from~\ref{eq:Thm1}:
\begin{align}
\nonumber
\rho_{XY\mid \A}
&=\frac{\cov(X,Y)
+\beta_{XZ_1}^{\top}\delta_\A\left(Z_1,Z_2\right)\beta_{YZ_2}^{\ }}{
\left(\sigma_X^2+\beta_{XZ_1}^{\top}\delta_\A\left(Z_1,Z_1\right)\beta_{XZ_1}^{\ }\right)^{\frac{1}{2}}
\left(\sigma_Y^2+\beta_{YZ_2}^{\top}\delta_\A\left(Z_2,Z_2\right)\beta_{YZ_2}^{\ }\right)^{\frac{1}{2}}}\\
&=\frac{\rho_{XY}
+\frac{\beta_{XZ_1}^{\top}}{\sigma_X}\delta_\A\left(Z_1,Z_2\right)\frac{\beta_{YZ_2}^{\ }}{\sigma_Y}}{
\left(1+\frac{\beta_{XZ_1}^{\top}}{\sigma_X}\delta_\A\left(Z_1,Z_1\right)\frac{\beta_{XZ_1}^{\ }}{\sigma_X}\right)^{\frac{1}{2}}
\left(1+\frac{\beta_{YZ_2}^{\top}}{\sigma_Y}\delta_\A\left(Z_2,Z_2\right)\frac{\beta_{YZ_2}^{\ }}{\sigma_Y}\right)^{\frac{1}{2}}}.
\label{PrfCol2_1}
\end{align}
We then replace $\beta_{XZ_1}$ and $\beta_{YZ_2}$ by their value. To this en we introduce $\tilde R_{XZ_1} = \{\rho_{XZ_{1i}}\}_{i\leq dim(Z_1)}$ and $\tilde R_{YZ_2} = \{\rho_{YZ_{2j}}\}_{j\leq dim(Z_2)}$, the vectors of correlation between $X$ and $Y$ against $Z_1$ and $Z_2$ respectively, so that
\begin{equation*}
\beta_{XZ_1} = \tilde R_{XZ_1}\sigma_Xdiag(\Sigma_{Z_1})\Sigma_{Z_1}^{-1}
\quad\text{and}\quad
\beta_{YZ_2} = \tilde R_{YZ_2}\sigma_Ydiag(\Sigma_{Z_2})\Sigma_{Z_2}^{-1}.
\end{equation*}
From~\eqref{PrfCol2_1} we now recover
\begin{equation}
\rho_{XY\mid \A}
=\frac{\rho_{XY}
+\tilde R_{XZ_1}^{\top}\bar\delta_\A\left(Z_1,Z_2\right)\tilde R_{YZ_2}}{
\left(1+\tilde R_{XZ_1}^{\top}\bar\delta_\A\left(Z_1,Z_1\right)\tilde R_{XZ_1}^{\ }\right)^{\frac{1}{2}}
\left(1+\tilde R_{YZ_2}^{\top}\bar\delta_\A\left(Z_2,Z_2\right)\tilde R_{YZ_2}^{\ }\right)^{\frac{1}{2}}}\cdot
\end{equation}
It follows that it is sufficient to obtain the result to show that $R_{XZ_1}$ and $R_{YZ_2}$ are equal to $\tilde R_{XZ_1}$ and $\tilde R_{YZ_2}$ respectively. To do so we use Theorem~\ref{Thm1} to evaluate $\rho_{XZ_{1i}\mid\A}$ and obtain (as in the simplified case presented after the said theorem):
\begin{align*}
\rho_{XZ_{1i}\mid\A}
& = \frac{\rho_{XZ_{1i}} + \rho_{XZ_{1i}}\left(\frac{\sigma_{Z_{1i}\mid\A}^2}{\sigma_{Z_{1i}}^2}-1\right)}{\left(1+\rho^2_{XZ_{1i}}\left(\frac{\sigma_{Z_{1i}\mid\A}^2}{\sigma_{Z_{1i}}^2}-1\right)\right)^\frac12\left(1+\left(\frac{\sigma_{Z_{1i}\mid\A}^2}{\sigma_{Z_{1i}}^2}-1\right)\right)^\frac12},
\end{align*}
which concludes the proof.
\end{proof}

\begin{proof}[Proof of Corollary~\ref{Col3}]\label{PrfCol3}
In the same fashion as for Corollary 2 we will consider here only the simplified framework where $Z_1 = Z_2 = Z$ and with $Z$ univariate. The result and the proof extends to the multivariate setting.
Then the proof is the same as that of Corollary 2 using $\theta = \left(\rho_{XY},\rho_{XZ},\rho_{YZ},\bar \delta\right)$ with the exact same function $\phi$.
\end{proof}
\bibliographystyle{imsart-nameyear}
\bibliography{ECC}
\end{document}